\documentclass[11pt,letterpaper]{article}
\usepackage[utf8]{inputenc}
\pdfoutput=1 
\usepackage[letterpaper,margin=1in]{geometry}

\usepackage{amsmath,amssymb,amsthm,amsfonts,latexsym,bbm,xspace,thm-restate}
\usepackage{graphicx,float,mathtools,braket,slantsc}
\usepackage{enumitem,booktabs,forest,mathdots,soul}
\usepackage[useregional]{datetime2}
\DTMusemodule{english}{en-US}

\usepackage[colorlinks,citecolor=blue,,linkcolor=magenta,bookmarks=true]{hyperref}
\usepackage[capitalise]{cleveref}

\usepackage{multicol,array}

\newtheorem{theorem}{Theorem}
\newtheorem{lemma}[theorem]{Lemma}

\newtheorem{corollary}[theorem]{Corollary}
\newtheorem{conjecture}[theorem]{Conjecture}

\crefname{result}{Result}{Results}

\crefname{fact}{Fact}{Facts}

\theoremstyle{definition}
\newtheorem{definition}[theorem]{Definition}



\DeclareMathOperator{\lambdaop}{\lambda}
\DeclareMathOperator{\Deltaop}{\Delta}

\newcommand{\ketbra}[2]{\ket{#1}\!\bra{#2}}


\newcommand{\YES}{\text{YES}}
\newcommand{\NO}{\text{NO}}



\newcommand{\class}[1]{\ensuremath{\mathsf{#1}}\xspace}
\mathchardef\mhyphen="2D 

\newcommand{\PTIME}{\class{P}}

\newcommand{\coNP}{\class{coNP}}

\newcommand{\QMA}{\class{QMA}}

\newcommand{\UQMA}{\class{UQMA}}

\newcommand{\PQMAlog}{\PTIME^{\QMA[\log]}}
\newcommand{\PUQMAlog}{\PTIME^{\UQMA[\log]}}

\newcommand{\PQMApar}{\PTIME^{||\QMA}}

\newcommand{\prob}[1]{\textup{\textsc{#1}}\xspace}

\newcommand{\kLH}[1][k]{\ensuremath{#1}\prob{-LH}}

\newcommand{\spectralGap}{\prob{Spectral Gap}}


\newcommand{\calX}{\mathcal{X}}
\newcommand{\calY}{\mathcal{Y}}


\DeclarePairedDelimiter\parend{\lparen}{\rparen}

\newcommand{\paren}[1]{\parend*{#1}}

\newcommand{\bO}[1]{\operatorname*{O}\paren{#1}}

\newcommand{\ie}{i.e.\xspace}

\newcommand{\eg}{e.g.\xspace}



\title{
    A Note on the Complexity of the Spectral Gap Problem
}

\author{
   Justin Yirka\\\small{\textsl{The University of Texas at Austin}}\\\small{\texttt{\href{mailto:yirka@utexas.edu}{yirka@utexas.edu}}}
}

\hypersetup{
    pdftitle={A Note on the Complexity of the Spectral Gap Problem},
    pdfauthor={Justin Yirka},
}

\date{March 2025}

\begin{document}

\maketitle

\begin{abstract}
    The problem of estimating the spectral gap of a local Hamiltonian is known to be
    contained in the class $\PQMAlog$: polynomial time with access to a logarithmic number
    of $\QMA$ queries.
    The problem was shown to be hard for $\PUQMAlog$, a weaker class, under Turing reductions
    by Gharibian and Yirka (Quantum 3, 189 (2019)).

    I give a brief proof that the Spectral Gap problem is $\QMA$-hard under a many-one (Karp) reduction.
    Consequently,
    the problem is
    $\PQMAlog$-complete under truth-table reductions.

    It remains open to characterize the complexity of the Spectral Gap problem under
    many-one reductions.
    I conjecture that the problem belongs to a strict subclass of $\PQMAlog$.
\end{abstract}
\smallskip

The spectral gap of a local Hamiltonian is the difference between the first and second eigenvalues.
The spectral gap is an important quantity in quantum computation, most notably because
it implies a lower bound on the runtime of the adiabatic algorithm
\cite{farhi2000quantumcomputationadiabaticevolution}.
It also affects the complexity of estimating the ground state energy
\cite{FeffermanAbhinav_DGF22_spectralGap}.

Let $\lambdaop_1(H) \leq \lambdaop_2(H) \leq \dots \leq \lambdaop_N(H)$
denote the eigenvalues of $H$ in non-decreasing order,
and let $\Deltaop\paren{H} = \lambdaop_2\paren{H} - \lambdaop_1\paren{H}$ denote the spectral gap of $H$.
To study the complexity of estimating this value,
Ambainis \cite{Ambainis14-hardThanQMA} defined the \spectralGap{} problem.\footnote{
    This definition allows for the possibility that $\Deltaop\paren{H} = 0$.
    Some works use ``spectral gap'' to refer to the difference between the first two \emph{distinct} eigenvalues.
    This choice is significant:
    in the adiabatic algorithm, if all states in a degenerate ground space are acceptable solutions,
    then it is actually this ``excited'' spectral gap that lower bounds the algorithm's run time.
    However, estimating the excited spectral gap appears very difficult.
    For example, consider a Hamiltonian with a degenerate ground space
    and suppose a prover offers two states to a verifier.
    For the verifier to
    distinguish between two states with the same energy (which do not define the excited gap, which may be small or large)
    and two states with exponentially close energies (forming a small excited gap)
    would likely be intractable.}\textsuperscript{,}\footnote{
    This problem is distinct from the study of families of transitionally invariant
    Hamiltonians in the thermodynamic limit on increasing lattice sizes.
    Estimating the spectral gap in this context was shown to be undecidable \cite{undecidability-nature}.
    Our problem takes a particular Hamiltonian on a particular system size as input.
}
He showed the problem is contained in the class $\PQMAlog$.

\begin{definition}[\spectralGap]\label{def:spectralGap}
    Given a $k$-local Hamiltonian $H$
    and real numbers $a,b$ such that $b-a\geq n^{-c}$,
    for $n$ the number of qubits $H$ acts on and $c>0$ some constant, decide:
    \begin{itemize}[nosep]
        \item If $\Deltaop\paren{H}\leq a$, output \YES{}.
        \item If $\Deltaop\paren{H}\geq b$, output \NO{}.
    \end{itemize}
\end{definition}

\begin{definition}
    $\PQMAlog$ is the class of problems decidable by a polynomial-time deterministic Turing machine making
    $\bO{\log n}$ queries to an oracle for $\QMA$.\footnote{
        Because $\QMA$ is a class of promise problems,
        the $\PTIME$ machine in $\PQMAlog$ might make invalid queries that violate the promise.
        For example, a query to the $\QMA$-complete problem $\kLH$ asks whether the ground state energy
        of a given Hamiltonian is at most $a$ or at least $b$ for parameters $a,b$,
        \emph{promised} that one is the case, meaning it is not between $a$ and $b$.
        Following the standard convention \cite{Goldreich06-promiseProblems},
        the oracle may answer any invalid queries arbitrarily and the $\PTIME$ machine must output the same final answer
        no matter how any invalid queries are answered.
        See \cite{GharibianYirka-GY19-PQMAlog-APXSIM} for a longer discussion.
    }
\end{definition}

\begin{lemma}[\cite{Ambainis14-hardThanQMA}]\label{sg-contained}
    \spectralGap{} is contained in $\PQMAlog$.
\end{lemma}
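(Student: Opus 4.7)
The plan is to estimate $\lambdaop_1(H)$ and $\lambdaop_2(H)$ separately, each to precision $(b-a)/8$, via binary search against a $\QMA$ oracle, and then output \YES{} iff the estimated gap is at most $(a+b)/2$. Since $H$ is $k$-local, $\norm{H} = \poly(n)$, and the required precision is at least $n^{-c}/8$, so each binary search takes $\bO{\log n}$ queries, for $\bO{\log n}$ total. For $\lambdaop_1(H)$, each query is an instance of the $\QMA$-complete Local Hamiltonian problem.

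The main ingredient is to show that the analogous promise problem for $\lambdaop_2(H)$---distinguishing $\lambdaop_2(H) \leq \alpha$ from $\lambdaop_2(H) \geq \alpha + n^{-d}$---also lies in $\QMA$. The verifier expects a witness $\ket{\Phi}_{AB}$ on two copies of the Hilbert space of $H$, and uniformly at random either (i) measures the energy of $H$ on register $A$, (ii) measures the energy of $H$ on register $B$, or (iii) performs a SWAP test across $A$ and $B$ to check for orthogonality. Completeness follows by taking $\ket{\Phi}_{AB} = \ket{\psi_1}_A \ket{\psi_2}_B$ for $\ket{\psi_1}, \ket{\psi_2}$ eigenstates with eigenvalues $\lambdaop_1(H), \lambdaop_2(H)$. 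Soundness follows from the min-max characterization $\lambdaop_2(H) = \min_{V:\,\dim V = 2}\,\max_{\ket{v}\in V}\,\bra{v} H \ket{v}$: if the reduced states on $A$ and $B$ each have energy at most $\alpha + o(1)$ and are approximately orthogonal, then they span an (approximately) $2$-dimensional subspace whose max-energy is at most $\alpha + o(1)$.

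The hardest part, I expect, is making the soundness analysis go through against a dishonest prover who entangles $A$ and $B$, in which case the SWAP-test statistics depend on the full Schmidt decomposition of $\ket{\Phi}_{AB}$ rather than on an inner product of pure states. I would handle this by applying $\QMA$ amplification to drive the soundness error exponentially small, and then invoking min-max on eigenvectors of the reduced states $\rho_A, \rho_B$ rather than on $\ket{\Phi}_{AB}$ itself. A second technicality is that $\QMA$ is a class of promise problems, so the $\PTIME$ machine may issue queries that violate the promise gap; following the standard $\PQMAlog$ convention, I would choose the binary-search thresholds and the spacing of the comparison to $(a+b)/2$ so that any promise-violating queries lie far enough from the final decision boundary that the output is invariant to how those queries are resolved.
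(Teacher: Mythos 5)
Your high-level plan is the same as the paper's: two binary searches, one for $\lambdaop_1\paren{H}$ and one for $\lambdaop_2\paren{H}$, each using $\bO{\log n}$ queries to a $\QMA$ oracle, with the usual care about promise-violating queries. The paper discharges the key ingredient---that deciding ``$\lambdaop_2\paren{H} \leq \alpha$ versus $\lambdaop_2\paren{H} \geq \alpha + n^{-d}$'' is in $\QMA$---by citing the known $\QMA$-completeness of estimating excited-state energies \cite{JordanGossetLove-JGL10-excitedkLH}, whereas you sketch your own verifier. That is a legitimate alternative route, but your soundness argument for it has a genuine error, and it is not the entanglement issue you flag.

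The problem is the min-max step. Two orthogonal unit vectors $\ket{v_1} \perp \ket{v_2}$ with $\bra{v_i}H\ket{v_i} \leq \alpha$ do \emph{not} certify that $\max_{\ket{v} \in \spn\{\ket{v_1},\ket{v_2}\}}\bra{v}H\ket{v} \leq \alpha + o(1)$; the cross term $\bra{v_1}H\ket{v_2}$ can be large. Concretely, take $H = 2\ketbra{-}{-}$ on one qubit, so $\lambdaop_1\paren{H} = 0$ and $\lambdaop_2\paren{H} = 2$. The product witness $\ket{0}_A\ket{1}_B$ has exactly orthogonal registers, each with energy expectation $1$, so under expectation-value energy tests it is at least as convincing as an honest witness for a Hamiltonian whose two lowest eigenvalues both equal $1$---yet here $\lambdaop_2 = 2$. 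What two orthogonal low-expected-energy vectors actually certify (via Ky Fan) is an upper bound on the \emph{sum} $\lambdaop_1 + \lambdaop_2 \leq 2\alpha$, which yields an upper bound on $\lambdaop_2$ only after subtracting a \emph{lower} bound on $\lambdaop_1$. Two standard repairs: (i) replace the expectation-value energy check with a threshold check via phase estimation, accepting only if the measured energy is below $\alpha + n^{-d}/2$; then in the \NO{} case the sub-threshold eigenspace is at most one-dimensional, both reduced states are forced onto the same pure state, the joint state is nearly product, and the SWAP test's antisymmetric outcome occurs with probability near $0$ rather than the honest value $1/2$. Or (ii) keep the expectation-value protocol, take the honest witness to be the antisymmetrization of the two lowest eigenstates, analyze the protocol as certifying $\lambdaop_1 + \lambdaop_2$, and combine with the two-sided estimate of $\lambdaop_1$ already produced by your first binary search. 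Either repair (or simply citing \cite{JordanGossetLove-JGL10-excitedkLH}, as the paper does) closes the gap; as written, the soundness claim is false.
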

\begin{proof}[Proof sketch]
    Given a Hamiltonian $H$,
    query the $\QMA$ oracle with instances of $\kLH$
    to perform binary search and
    estimate the ground state energy to within inverse-polynomial precision;
    use the slightly modified binary search algorithm of \cite{Ambainis14-hardThanQMA} that is robust to oracles for promise problems.
    Similarly, given that estimating $\lambdaop_c\paren{H}$ is $\QMA$-complete for any constant $c$ \cite{JordanGossetLove-JGL10-excitedkLH},
    binary search for $\lambda_2\paren{H}$.
    This gives an estimate of $\Deltaop\paren{H}$ using at most $\bO{\log n}$ queries.
\end{proof}

Ambainis \cite{Ambainis14-hardThanQMA} also claimed to show that \spectralGap{} is $\PUQMAlog$-hard.
$\UQMA$, for ``Unique''-$\QMA$, is the subclass of $\QMA$ in which, in the \YES{} case,
the subspace of accepting witnesses is 1-dimensional.
Gharibian and Yirka \cite{GharibianYirka-GY19-PQMAlog-APXSIM} later identified a bug in the hardness proof
related to handling invalid queries.
They gave an alternative proof that \spectralGap{} is $\PUQMAlog$-hard under a polynomial-time Turing reduction (Cook reduction).
Specifically, their approach decided any $\PUQMAlog$ problem using a polynomial number of queries to \spectralGap{}.

In this note, I improve on the result of \cite{GharibianYirka-GY19-PQMAlog-APXSIM}.
I show that \spectralGap{} is $\QMA$-hard under a polynomial-time many-one reduction (Karp reduction).
This trivially implies $\PQMAlog$-hardness under a polynomial-time Turing reduction making a logarithmic number of calls to \spectralGap{}.
Because this reduction uses a logarithmic number of queries, instead of polynomial,
it implies the problem is $\PQMAlog$-hard under a truth-table reduction.

\section*{Results}
\begin{lemma}\label{sg-QMA-hard}
    \spectralGap{} is $\QMA$-hard under a polynomial-time many-one reduction.
\end{lemma}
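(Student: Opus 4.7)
\emph{Plan.} The strategy is a many-one reduction from $\kLH$ (which is $\QMA$-complete) to \spectralGap{}. The plan is to embed the input Hamiltonian $H$ into a slightly larger Hamiltonian $H'$ whose spectral gap equals the ground state energy $\lambdaop_1(H)$, so that a YES/NO instance of $\kLH$ with thresholds $(a,b)$ translates directly into a YES/NO instance of \spectralGap{} with the same thresholds.

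First, I would shift $H$ by a polynomial multiple of the identity so that $H \succeq 0$, adjusting $a$ and $b$ accordingly; the standard $\QMA$-hard constructions of $\kLH$ (e.g.\ Kitaev's) already satisfy $H\succeq 0$, so this step is really just bookkeeping. Then, introducing a single ancilla qubit, I would define
\[
    H' \;=\; H \otimes \ketbra{0}{0}_{\mathrm{anc}} \;+\; K \otimes \ketbra{1}{1}_{\mathrm{anc}},
\]
where $K = J\sum_{i=1}^{n}\ketbra{1}{1}_i$ is a $1$-local Hamming-weight penalty on the original $n$ qubits, with unique ground state $\ket{0}^{\otimes n}$ of energy $0$ and next eigenvalue $J$. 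I would pick any polynomial $J > b$. Then $H'$ is $(k+1)$-local, and its spectrum decomposes into two blocks: $\{\lambdaop_i(H)\}$ in the ancilla-$\ket{0}$ sector and $\{0, J, 2J,\ldots, nJ\}$ in the ancilla-$\ket{1}$ sector. Because $H \succeq 0$ and $\lambdaop_1(H) < J$, the unique ground state of $H'$ is $\ket{0}^{\otimes n}\ket{1}_{\mathrm{anc}}$ with energy $0$, and the second-smallest eigenvalue is exactly $\lambdaop_1(H)$. Hence $\Deltaop(H') = \lambdaop_1(H)$, and the map $(H,a,b) \mapsto (H',a,b)$ is a correct polynomial-time many-one reduction preserving the $1/\poly(n)$ promise gap.

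The construction outputs a single \spectralGap{} instance with no oracle calls, so it is genuinely a Karp reduction. I expect no substantive obstacle. The one design choice that matters is making the ancilla-$\ket{1}$ sector contribute a \emph{unique} minimum at $0$: without it, $H'$ would have a zero gap regardless of $H$. The $1$-local penalty $K$ is included precisely to ensure both this uniqueness and that the remaining eigenvalues in that sector sit above $b$, cleanly isolating $\lambdaop_1(H)$ as the second-smallest eigenvalue of $H'$.
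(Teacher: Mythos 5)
Your proposal is correct and takes essentially the same approach as the paper: both attach a single ancilla qubit to form a block-diagonal $H'$ with $H$ in one sector and, in the other, a penalty Hamiltonian having a unique null state and all remaining eigenvalues above $b$, so that $\Deltaop\paren{H'} = \min\{\lambdaop_1\paren{H}, J\}$ (the paper uses the projector $I - \ketbra{0\cdots 0}{0\cdots 0}$ with threshold $1$ where you use a $1$-local Hamming-weight penalty at scale $J$). The only slips are your claims that the second eigenvalue is \emph{exactly} $\lambdaop_1\paren{H}$ and that the ground state of $H'$ is unique --- in the \NO{} case $\lambdaop_1\paren{H}$ could exceed $J$ unless $J$ is chosen above $\norm{H}$, and in the \YES{} case $\lambdaop_1\paren{H}$ could be $0$ --- but since $\Deltaop\paren{H'} = \min\{\lambdaop_1\paren{H}, J\}$ with $J > b$, the reduction is unaffected.
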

\begin{proof}
    Consider an arbitrary instance of the $\QMA$-complete problem $\kLH[2]$
    with Hamiltonian $H$ and parameters $a=1/3,b=2/3$
    with the goal of deciding
    whether $\lambdaop_1\paren{H} \leq a$ or $\lambdaop_1\paren{H} \geq b$.
    We will show that $\kLH[2]\paren{H,a,b}$ is equivalent to
    $\spectralGap\paren{H',a,b}$ for $H'$ constructed below.

    Denote the space on which $H$ acts as $\calY$. Introduce an additional single-qubit space $\calX$,
    and construct the 3-local Hamiltonian
    \[
        H' = \ketbra{0}{0}_{\calX} \otimes (I - \ketbra{0\dots0}{0\dots0})_{\calY} + \ketbra{1}{1}_{\calX} \otimes H_{\calY} .
    \]

    $H'$ is block-diagonal with respect to $\calX$, and we can analyze the spectrum of $H'$ accordingly.
    In the $\ket{0}$-block,
    there is a single nullstate
    $\ket{\phi_0} := \ket{0}_{\calX}\ket{0\dots0}_{\calY}$.
    All other states in the $\ket{0}$-block have have energy 1.
    In the $\ket{1}$-block, the spectrum of $H'$ is the same as $H$.

    Since $H$ is positive semidefinite (by definition of $\kLH$),
    no eigenvalues in the $\ket{1}$-block are below the
    nullstate in the $\ket{0}$-block, so $\lambdaop_1\paren{H'}= 0$.
    To find the second eigenvalue,
    we take the minimum of the eigenvalues from the $\ket{1}$-block and
    the remaining eigenvalues from the $\ket{0}$-block, which are all 1.
    So $\lambdaop_2\paren{H'} = \min\{ \lambdaop_1\paren{H}, 1 \}$.
    Therefore the spectral gap is
    \[
        \Deltaop\paren{H'} = \lambdaop_2\paren{H'}-\lambdaop_1\paren{H'} = \min\{ \lambdaop_1(H), 1 \} .
    \]

    If $\lambdaop_1\paren{H} \leq a$, then $\Deltaop\paren{H'} \leq 1/3 = a$.
    Inversely, if $\lambdaop_1\paren{H} \geq b$, then $\Deltaop\paren{H'} = \geq 2/3 = b$.
    Therefore, the answer to the $\kLH[2]$ instance is the same as the \spectralGap{} instance.
\end{proof}

$\PQMAlog$-hardness under Turing reductions follows trivially.
The $\PQMAlog$ computation can be simulated in polynomial-time,
substituting any query to $\QMA$ with a query to \spectralGap{}.

\begin{corollary}\label{sg-hard-turing}
    \spectralGap{} is $\PQMAlog$-hard under a polynomial-time Turing reduction.
\end{corollary}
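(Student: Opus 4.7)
The plan is to compose \cref{sg-QMA-hard} with an arbitrary $\PQMAlog$ computation. Given any $L \in \PQMAlog$ decided by a polynomial-time deterministic Turing machine $M$ with access to $\bO{\log n}$ queries to a $\QMA$ oracle, I will simulate $M$ step by step, using the \spectralGap{} oracle to answer each of $M$'s queries. Concretely, each $\QMA$ query $M$ issues is first cast as an instance of a fixed $\QMA$-complete problem (say $\kLH$), and then the polynomial-time many-one reduction from \cref{sg-QMA-hard} is applied to produce an equivalent \spectralGap{} instance. Submitting this to the \spectralGap{} oracle yields the same YES/NO answer as the original $\QMA$ query, so the simulated $M$ can proceed with this reply and eventually outputs a verdict for $L$. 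Since each translation and each simulation step runs in polynomial time, the overall reduction is a polynomial-time Turing reduction, and it uses at most $\bO{\log n}$ queries to \spectralGap{}.

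The only subtlety, flagged by \cite{GharibianYirka-GY19-PQMAlog-APXSIM} in the analogous context, concerns promise-problem semantics. A $\QMA$ query that violates the $\QMA$ promise might be mapped by \cref{sg-QMA-hard} to a \spectralGap{} instance that also violates the \spectralGap{} promise. Under the standard convention for oracles to promise classes, the \spectralGap{} oracle may answer any such invalid query arbitrarily, while $M$ is required to output the correct final verdict no matter how its invalid queries are resolved. Because the reduction in \cref{sg-QMA-hard} is a single fixed polynomial-time function, any reply the \spectralGap{} oracle gives on an invalid translated query corresponds to a well-defined reply on the original invalid $\QMA$ query, and $M$'s robustness transfers unchanged to the simulation. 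I expect no genuine obstacle beyond this bookkeeping; the argument is essentially the general observation that a polynomial-time many-one reduction to a problem $A$ lifts to a polynomial-time Turing reduction from any oracle class $\PTIME^{A}$, applied carefully with respect to promise-problem conventions.
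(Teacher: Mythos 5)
Your proposal is correct and matches the paper's argument, which is exactly the observation that the $\PQMAlog$ computation can be simulated while substituting each $\QMA$ query with the \spectralGap{} instance produced by \cref{sg-QMA-hard}. Your additional care with the promise-problem convention (invalid $\QMA$ queries mapping to invalid \spectralGap{} queries, with robustness of the base machine transferring) is a sound elaboration of the same approach rather than a different route.
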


Because the reduction for \cref{sg-hard-turing} makes one query to \spectralGap{} for each query to $\QMA$,
it makes at most a logarithmic number of queries.
As I explain below, this implies hardness under truth-table reductions,
a stricter type of reduction than Turing reductions.

A polynomial-time truth-table reduction allows for a polynomial number of queries
but requires all queries to be non-adapative, \ie{} in parallel
(see \eg{} \cite{Beigel91-boundedQueriesBooleanHierarchy,BussHay-BH91-truthTableSAT}).
In contrast, general Turing reductions are allowed to make queries,
process answers,
and base future queries on those answers.

\begin{theorem}\label{sg-hard-truthtable}
    \spectralGap{} is $\PQMAlog$-hard under a polynomial-time truth-table reduction.
\end{theorem}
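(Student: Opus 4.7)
The plan is to apply a standard bounded-query technique: any polynomial-time Turing reduction that makes only $\bO{\log n}$ oracle queries can be converted into a polynomial-time truth-table reduction. \Cref{sg-hard-turing} already provides exactly such a reduction from an arbitrary $\PQMAlog$ problem to \spectralGap{}, since the simulated $\PQMAlog$ computation issues only $\bO{\log n}$ $\QMA$ queries, each of which is translated into a single \spectralGap{} query via \Cref{sg-QMA-hard}.

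Given this logarithmic-query Turing reduction $R$, I would construct the desired truth-table reduction by precomputing the binary tree of all possible execution branches of $R$. The root corresponds to the first query of $R$ (fixed by the input), and the two children at each internal node correspond to the two possible oracle answers to the query at that node. Because $R$ makes at most $\bO{\log n}$ queries, the tree has depth $\bO{\log n}$ and size $\poly(n)$, and it can be constructed in polynomial time by simulating $R$ along each path. The truth-table reduction outputs the union of the queries labeling the internal nodes of the tree: a polynomial-size, nonadaptive query set. Upon receiving the oracle's answers, it descends the tree from the root, at each node following the child determined by the received answer to that node's query, and outputs the label of the reached leaf.

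The step I expect to require the most care is the handling of invalid queries, since \spectralGap{} is a promise problem and some nodes of the tree may carry queries that violate the promise. By the standard convention, $R$ must produce the correct output regardless of how the oracle responds to invalid queries. I would verify that this invariance transfers cleanly: for any fixed pattern of oracle answers, the leaf reached by traversing my tree is precisely the output that $R$ produces when supplied with those same answers, so the truth-table reduction agrees with $R$ on every answer pattern and is therefore correct on every valid input. No further quantitative analysis is needed beyond confirming the $\poly(n)$ tree bound.
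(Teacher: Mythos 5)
Your proposal is correct and follows essentially the same route as the paper: both take the logarithmic-query Turing reduction from \cref{sg-hard-turing}, enumerate the $2^{\bO{\log n}} = \poly(n)$ possible answer branches to precompute a polynomial-size set of nonadaptive queries, and then use the received answers to select the correct computation path (the paper attributes this to Beigel's general result that $\PTIME^{\class{X}\left[\log\right]} \subseteq \PTIME^{\class{||X}}$). Your explicit attention to invalid queries under the promise-problem convention is a reasonable extra check, consistent with the convention the paper adopts in its definition of $\PQMAlog$.
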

\begin{proof}
    Beigel \cite{Beigel91-boundedQueriesBooleanHierarchy} proved
    that for all complexity classes $\class{X}$,
    a logarithmic number of adapative queries can be reduced to a polynomial number of non-adaptive queries.
    In notation,
    $\PTIME^{\class{X}\left[\log\right]} \subseteq \PTIME^{\class{||X}}$ for all classes $\class{X}$,
    where the latter class uses non-adaptive queries.
    Indeed, \cite{GPY} showed that $\PQMAlog=\PQMApar$,
    where the forward containment follows from Beigel's result.

    The reduction is as follows.
    With $\bO{\log n}$ queries, there are $2^{\bO{\log n}}$ possible sequences of query answers.
    Therefore, the $\PTIME^{\class{X}\left[\log\right]}$ computation can be simulated over all paths in polynomial time,
    generating at most a polynomial number of different queries it might ask the $\class{X}$ oracle.
    After pre-computing these possible queries,
    a $\PTIME^{\class{||X}}$ machine can ask all of them in parallel.
    The answers reveal the correct computation path and the corresponding output.

    Using the above approach,
    the polynomial-time Turing reduction of $\PQMAlog$ to \spectralGap{}
    making a logarithmic number of queries
    can be converted into a polynomial-time reduction making a polynomial number
    of non-adaptive queries.
    This is a truth-table reduction, as desired.
\end{proof}

Finally, combining \cref{sg-contained,sg-hard-truthtable} yields the following theorem.

\begin{theorem}
    \spectralGap{} is $\PQMAlog$-complete, with hardness under polynomial-time truth-table reductions $(\PQMAlog \leqslant^{\mathsf{p}}_{\mathsf{tt}} \spectralGap{})$.
\end{theorem}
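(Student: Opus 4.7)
The plan is a direct packaging of two results already established in this note. A $\PQMAlog$-completeness claim under truth-table reductions decomposes into (i) a containment part, \spectralGap{} $\in \PQMAlog$, and (ii) a hardness part, every $\PQMAlog$ problem reduces to \spectralGap{} under a polynomial-time truth-table reduction. Both parts are in hand, so I would simply state the theorem as a corollary of the two.

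For (i), I would invoke \cref{sg-contained} verbatim, which supplies Ambainis's binary-search algorithm using a $\QMA$ oracle for $\kLH$ (and its excited-energy variant) with at most $\bO{\log n}$ queries. For (ii), I would invoke \cref{sg-hard-truthtable}, which was obtained by first proving $\QMA$-hardness of \spectralGap{} via the block-diagonal padding construction in \cref{sg-QMA-hard}, lifting this to $\PQMAlog$-hardness under a Turing reduction making only logarithmically many queries to \spectralGap{}, and finally applying Beigel's conversion of a logarithmic number of adaptive queries into a polynomial number of non-adaptive queries.

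The combination is mechanical, and there is no real obstacle at this stage: the substantive content was already absorbed into the two cited results, in particular the Hamiltonian construction making $\Deltaop(H')=\min\{\lambdaop_1(H),1\}$ and the argument that the $\PQMAlog$-to-\spectralGap{} reduction uses only $\bO{\log n}$ queries (which is what allows the upgrade from Turing to truth-table hardness). Thus the proof collapses to a one-line appeal to \cref{sg-contained} and \cref{sg-hard-truthtable}.
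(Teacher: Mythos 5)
Your proposal is correct and matches the paper exactly: the paper also obtains this theorem by simply combining \cref{sg-contained} (containment) with \cref{sg-hard-truthtable} (truth-table hardness), with all substantive work already done in those two results.
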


\subsection*{Open Problem}
This note improves the characterization of \spectralGap{}
from $\PUQMAlog$-hard under polynomial-query Turing (Cook) reductions
to $\PQMAlog$-hard under a truth-table reduction.
This matches the containment of \spectralGap{} in $\PQMAlog$.
It remains open to show completeness under a polynomial-time many-one (Karp) reduction.

For bounded-query classes like $\PQMAlog$,
the difference between Turing and many-one reductions seems especially meaningful.
Pointedly,
although it is believed $\QMA$ is a \emph{strict} subclass of $\PQMAlog$ (or else $\QMA$ equals $\class{coQMA}$),
the $\QMA$ problem $\kLH$ is $\PQMAlog$-hard under truth-table reductions!
Normally, showing a problem is complete for a complexity class
can be interpreted as saying the problem characterizes and is characterized by the class.
It seems truth-table reductions cannot be interpreted as strongly in this context.
(In contrast, the problem \prob{Apx-Sim} is known to be $\PQMAlog$-complete
under a many-one reduction \cite{Ambainis14-hardThanQMA,GharibianYirka-GY19-PQMAlog-APXSIM}.)

Additionally, the class $\PQMAlog$, like $\PTIME$,
is closed under complement, whereas
the \YES{} and \NO{} cases in
\spectralGap{} appear asymmetrical
(the $\PQMAlog$-complete problem \prob{Apx-Sim} was eventually shown to be equivalent with a symmetrical definition, \prob{$\forall$-Apx-Sim} \cite{GPY}).
Perhaps instead of attempting to prove \spectralGap{} is $\PQMAlog$-hard,
the correct goal is to prove a stricter upper bound on the problem.

\begin{conjecture}
	\spectralGap{} is \emph{not} $\PQMAlog$-hard under many-one reductions,
	and the problem is contained in a weaker class than $\PQMAlog$.
\end{conjecture}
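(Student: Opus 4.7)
The conjecture has two parts: (i) \spectralGap{} lies in a class strictly weaker than $\PQMAlog$, and (ii) \spectralGap{} is not $\PQMAlog$-hard under many-one reductions. I would attack the upper bound (i) first, since a sufficiently tight containment implies (ii) modulo standard separations in the $\QMA$ boolean hierarchy.

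The target class is suggested by the \YES{}/\NO{} asymmetry highlighted in the discussion above. The \YES{} case ($\Deltaop\paren{H}\leq a$) is existential in nature: by the Ky Fan minimum principle $\lambdaop_1\paren{H}+\lambdaop_2\paren{H} = \min_{\dim S = 2}\tr\paren{H\Pi_S}$, a small gap is witnessed by two orthogonal low-energy states, a $\QMA$-style certificate. The \NO{} case ($\Deltaop\paren{H}\geq b$) is universal: it forbids any such pair, which is naturally $\class{coQMA}$-style. My plan is to formalize this intuition as a containment in a low level of the boolean hierarchy over $\QMA$, believed strictly inside $\PQMAlog$ (otherwise $\QMA = \class{coQMA}$). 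Concrete targets to attempt are $\class{DP}^{\QMA}$ --- the class of intersections $L_1\cap L_2$ with $L_1\in\QMA$ and $L_2\in\class{coQMA}$ --- and, more ambitiously, $\PTIME^{\QMA[O(1)]}$, polynomial time with only a constant number of $\QMA$ queries.

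The main obstacle is that both the \YES{} and \NO{} protocols refer implicitly to the unknown value $\lambdaop_1\paren{H}$. Naively recovering $\lambdaop_1$ by binary search costs $\bO{\log n}$ queries and lands back in $\PQMAlog$, defeating the goal. A prover cannot be trusted to report $\lambdaop_1$ accurately either, since overstating it inflates the apparent gap in the \YES{} case. Circumventing this calls for either a self-referential verification --- a single predicate whose completeness/soundness gap is expressed in terms of the true (unknown) $\lambdaop_1$, perhaps obtained by folding $\lambdaop_1$ into a modified Hamiltonian in the spirit of the construction in \cref{sg-QMA-hard} --- or a protocol whose gap is robust to arbitrary prover claims about $\lambdaop_1$. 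I expect this to be the hardest technical step.

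For part (ii), if the above containment succeeds then non-hardness under many-one reductions follows from the conjectured separation of the $\QMA$ boolean hierarchy. An unconditional version would require an oracle separation: exhibit an oracle $\calO$ relative to which $\spectralGap^{\calO}$ remains in the smaller class while $\PQMAlog^{\calO}$ strictly contains it, ruling out relativizing many-one reductions --- a delicate but familiar task in bounded-query complexity.
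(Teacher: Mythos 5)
The statement you are addressing is a \emph{conjecture}: the paper offers no proof of it, only motivation (the \YES{}/\NO{} asymmetry and the closure of $\PQMAlog$ under complement). Your submission is likewise a research plan rather than a proof --- neither the containment in a weaker class nor the non-hardness is established by anything you wrote. More importantly, the plan contains a substantive error at its foundation. You classify the \YES{} case ($\Deltaop\paren{H}\leq a$) as ``existential'' and hence $\QMA$-style, citing the Ky Fan principle. But two orthogonal states with energies within $a$ of each other do \emph{not} witness a small spectral gap: for $H$ with spectrum $\{0,\,10,\,10+a\}$ the pair of excited eigenstates has energy difference $a$ while $\Deltaop\paren{H}=10$. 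A genuine witness must additionally certify that the lower of the two energies is close to $\lambdaop_1\paren{H}$, and that sub-statement is universally quantified (``no state has smaller energy''), \ie it is $\class{coQMA}$-flavored. The Ky Fan principle only certifies $\lambdaop_1+\lambdaop_2$ from above; the gap is a \emph{difference} of eigenvalues and is not monotone under such one-sided certificates. So the \YES{} case already mixes both quantifiers, which is precisely why the problem appears to sit above both $\QMA$ and $\class{coQMA}$ and why the known upper bound (\cref{sg-contained}) spends $\bO{\log n}$ adaptive queries pinning down $\lambdaop_1$. You flag the unknown $\lambdaop_1$ as ``the hardest technical step,'' but it is not a wrinkle in an otherwise sound decomposition --- it invalidates the decomposition, and any containment in $\class{DP}^{\QMA}$ or $\PTIME^{\QMA[O(1)]}$ would require a genuinely new idea.

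For part (ii), your argument is doubly conditional: it assumes the containment from part (i) succeeds, and then invokes an unproven separation of the boolean hierarchy over $\QMA$ (which must in any case be handled carefully for promise classes). The oracle route you mention would only rule out \emph{relativizing} many-one reductions, not the existence of a reduction, so it cannot establish the conjecture unconditionally either. None of this makes your instincts wrong --- the direction you propose is the one the paper itself suggests --- but the accurate framing is that you have restated the conjecture together with a candidate attack whose first step currently fails.
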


Otherwise, to prove \spectralGap{} is $\PQMAlog$-hard under many-one reductions,
a good starting point would be to show $\class{coQMA}$- or even $\coNP$-hardness.

\subsection*{Acknowledgments}
This material is based upon work supported by the U.S. Department of Energy, Office of Science, National Quantum Information Science Research Centers, Quantum Systems Accelerator.

\bibliographystyle{alphaurl}
\bibliography{bibliography}

\begin{thebibliography}{CPGW15}

\bibitem[Amb14]{Ambainis14-hardThanQMA}
A.~Ambainis.
\newblock On physical problems that are slightly more difficult than {QMA}.
\newblock In {\em 2014 IEEE Conference on Computational Complexity (CCC)},
  pages 32--43, 2014.
\newblock \href {https://doi.org/10.1109/CCC.2014.12}
  {\path{doi:10.1109/CCC.2014.12}}.

\bibitem[Bei91]{Beigel91-boundedQueriesBooleanHierarchy}
R.~Beigel.
\newblock Bounded queries to {SAT} and the {B}oolean hierarchy.
\newblock {\em Theoretical computer science}, 84:199--223, 1991.
\newblock \href {https://doi.org/10.1016/0304-3975(91)90160-4}
  {\path{doi:10.1016/0304-3975(91)90160-4}}.

\bibitem[BH91]{BussHay-BH91-truthTableSAT}
S.R. Buss and L.~Hay.
\newblock On truth-table reducibility to {SAT}.
\newblock {\em Information and Computation}, 91(1):86--102, 1991.
\newblock \href {https://doi.org/10.1016/0890-5401(91)90075-D}
  {\path{doi:10.1016/0890-5401(91)90075-D}}.

\bibitem[CPGW15]{undecidability-nature}
T.~Cubitt, D.~Perez-Garcia, and M.~Wolf.
\newblock Undecidability of the spectral gap.
\newblock {\em Nature}, 528:207--211, 2015.
\newblock \href {https://doi.org/10.1038/nature16059}
  {\path{doi:10.1038/nature16059}}.

\bibitem[DGF22]{FeffermanAbhinav_DGF22_spectralGap}
Abhinav Deshpande, Alexey~V. Gorshkov, and Bill Fefferman.
\newblock Importance of the spectral gap in estimating ground-state energies.
\newblock {\em PRX Quantum}, 3:040327, 2022.
\newblock \href {https://doi.org/10.1103/PRXQuantum.3.040327}
  {\path{doi:10.1103/PRXQuantum.3.040327}}.

\bibitem[FGGS00]{farhi2000quantumcomputationadiabaticevolution}
Edward Farhi, Jeffrey Goldstone, Sam Gutmann, and Michael Sipser.
\newblock Quantum computation by adiabatic evolution, 2000.
\newblock \href {https://arxiv.org/abs/quant-ph/0001106v1}
  {\path{arXiv:quant-ph/0001106v1}}, \href
  {https://doi.org/10.48550/arXiv.quant-ph/0001106}
  {\path{doi:10.48550/arXiv.quant-ph/0001106}}.

\bibitem[Gol06]{Goldreich06-promiseProblems}
Oded Goldreich.
\newblock On promise problems: {A} survey.
\newblock In Oded Goldreich, Arnold Rosenberg, and Alan Selman, editors, {\em
  Theoretical Computer Science: Essays in Memory of {S}himon {E}ven}, pages
  254--290. Springer Berlin Heidelberg, 2006.
\newblock \href {https://doi.org/10.1007/11685654_12}
  {\path{doi:10.1007/11685654_12}}.

\bibitem[GPY20]{GPY}
Sevag Gharibian, Stephen Piddock, and Justin Yirka.
\newblock Oracle complexity classes and local measurements on physical
  {H}amiltonians.
\newblock In Christophe Paul and Markus Bl\"{a}ser, editors, {\em 37th
  International Symposium on Theoretical Aspects of Computer Science (STACS
  2020)}, volume 154 of {\em Leibniz International Proceedings in Informatics
  (LIPIcs)}, pages 20:1--20:37. Schloss Dagstuhl -- Leibniz-Zentrum f{\"u}r
  Informatik, 2020.
\newblock \href {https://doi.org/10.4230/LIPIcs.STACS.2020.20}
  {\path{doi:10.4230/LIPIcs.STACS.2020.20}}.

\bibitem[GY19]{GharibianYirka-GY19-PQMAlog-APXSIM}
Sevag Gharibian and Justin Yirka.
\newblock The complexity of simulating local measurements on quantum systems.
\newblock {\em Quantum}, 3:189, 2019.
\newblock \href {https://doi.org/10.22331/q-2019-09-30-189}
  {\path{doi:10.22331/q-2019-09-30-189}}.

\bibitem[JGL10]{JordanGossetLove-JGL10-excitedkLH}
Stephen~P. Jordan, David Gosset, and Peter~J. Love.
\newblock Quantum-{M}erlin-{A}rthur-complete problems for stoquastic
  {H}amiltonians and {M}arkov matrices.
\newblock {\em Phys. Rev. A}, 81:032331, 2010.
\newblock \href {https://doi.org/10.1103/PhysRevA.81.032331}
  {\path{doi:10.1103/PhysRevA.81.032331}}.

\end{thebibliography}

\end{document}